\def\cA{{\mathcal{A}}}  \def\cC{{\mathcal{C}}} 
  \def\cK{{\mathcal{K}}} \def\cL{{\mathcal{L}}}
 \def\cN{{\mathcal{N}}}  
  \def\cW{{\mathcal{W}}} 
 \def\cZ{{\mathcal{Z}}}
\def\ba{{\mathbf{a}}} \def\bb{{\mathbf{b}}}
   \def\bs{{\mathbf{s}}} 
   \def\bx{{\mathbf{x}}} \def\by{{\mathbf{y}}}
\def\bz{{\mathbf{z}}}
\def\bA{{\mathbf{A}}} \def\bB{{\mathbf{B}}}   
 \def\bG{{\mathbf{G}}} \def\bH{{\mathbf{H}}} \def\bI{{\mathbf{I}}} 
   \def\bN{{\mathbf{N}}} 
\def\bP{{\mathbf{P}}}    
  \def\bW{{\mathbf{W}}}
\def\det{\mathop{\mathrm{det}}}
\def\tr{\mathop{\mathrm{tr}}}
\def\dim{\mathop{\mathrm{dim}}}
\def\rank{\mathop{\mathrm{rank}}}
\def\span{\mathop{\mathrm{span}}}
     \def\d4{\!\!\!\!}
 \def\bsh{\backslash}  
\def\bPi{\mathbf{\Pi}}
   \def\C{{\mathbb{C}}} 
\def\bzero{\mathbf{0}}
\def\lp{\left(}     \def\rp{\right)}    \def\ls{\left\{}    \def\rs{\right\}}    \def\lS{ \left[ }
\def\rS{ \right] }                
  \def\barm{\bar{m}}
    \def\bbA{\bar{\bA}}
\def\bbB{\bar{\bB}}    
    \def\tbA{\widetilde{\bA}} \def\tbB{\widetilde{\bB}}
 \def\tbz{\tilde{\bz}}  \def\tbs{\tilde{\bs}}
    \def\hbB{\widehat{\bB}}
  \def\-{\! - \!}  \def\+{\! + \!}  \def\={\! = \!}  \def\>{\! > \!}
\newtheorem{theorem}{Theorem}
\newtheorem{lemma}{Lemma}
\newcommand{\bet}{\begin{table}}
\newcommand{\eet}{\end{table}}
\newcommand{\btt}{\begin{tabular}}
\newcommand{\ett}{\end{tabular}}
\newcommand{\bec}{\begin{center}}
\newcommand{\eec}{\end{center}}
\newcommand{\bef}{\begin{figure}}
\newcommand{\eef}{\end{figure}}
\newcommand{\beq}{\begin{eqnarray}}
\newcommand{\eeq}{\end{eqnarray}}
\newcommand{\bea}{\begin{array}}
\newcommand{\eea}{\end{array}}
\newenvironment{proof}[1][Proof]{\begin{trivlist}
\item[\hskip \labelsep {\bfseries #1}]}{\end{trivlist}}
\newcommand{\qed}{\nobreak \ifvmode \relax \else
\ifdim\lastskip<1.5em \hskip-\lastskip
\hskip1.5em plus0em minus0.5em \fi \nobreak
\vrule height0.75em width0.5em depth0.25em\fi}
\def\val{1.0}
\begin{document}

\title{ Spatial Degrees of Freedom of the Multicell MIMO Multiple Access Channel}
\author{Taejoon~Kim$^\dag$,
        David~J.~Love$^\dag$,
        Bruno~Clerckx$^\ddag$, 
        and~Duckdong~Hwang$^\ddag$ \\
\IEEEauthorblockA{ \small$^\dag$School of Electrical and Computer Engineering, Purdue University, West Lafayette, IN 47906, USA\\
Email: kim487@ecn.purdue.edu, djlove@ecn.purdue.edu \\
$^\ddag$Samsung Electronics, Yongin-Si, Gyeonggi-Do, Korea 446-712\\
Email: bruno.clerckx@samsung.com, duckdong.hwang@samsung.com}
\thanks{This work was supported in part by Samsung Electronics.}}
\maketitle

\begin{abstract}
We consider a homogeneous multiple cellular scenario with multiple users per cell, i.e.,  
$K\geq 1$ where $K$ denotes the number of users in a cell. In this scenario, 
a degrees of freedom outer bound as well as an achievable scheme that attains the degrees of freedom outer bound
of the multicell multiple access channel (MAC) with constant channel coefficients are investigated.  
The users have $M$ antennas, and the base stations are equipped with $N$ antennas. 
The found outer bound is general in that it characterizes a degrees of freedom upper bound for $K\geq 1$
and $L>1$ where $L$ denotes the number of cells. 
The achievability of the degrees of freedom outer bound is studied for two cell case (i.e., $L=2$).
The achievable schemes that attains the degrees of freedom outer bound for $L\=2$ are based on two approaches. 
The first scheme is a simple zero forcing with $M\=K\beta\+\beta$ and $N\=K\beta$, and 
the second approach is null space interference alignment with $M\=K\beta$ and $N\=K\beta\+\beta$ 
where $\beta>0$ is a positive integer.  
\end{abstract}
\IEEEpeerreviewmaketitle

\def\val{1.0}
\scalefont{\val}
%
%
\section{Introduction}


Challenges in identifying the exact information-theoretic capacity of general interfering networks 
motivates people to study the approximated capacity in the high SNR 
regime (some of which can be practically achieved in small cell scenarios) by analyzing the number 
of resolvable signal dimensions in terms of the degrees of freedom of the network. 
Initial works include the degrees of freedom and/or capacity region characterization for the MIMO 
multiple access channel (MAC) \cite{Tse1} and MIMO broadcast channel \cite{vishwanath,viswanath,weingarten}. 
Recently, the degrees of freedom have been studied broadly for various kinds of 
networks \cite{Maddah, Jafar2, Jafar1, Cadambe1, Cadambe2, Gou, Suh, Suh1, Honig1}. 
The key innovation used to prove the achievability of the degrees of freedom in 
\cite{Jafar1, Cadambe1, Cadambe2, Gou, Suh} is interference alignment.              
Interference alignment generates overlapping interference subspaces at the receiver while 
keeping the desired signal spaces distinct. 
When the degrees of freedom outer bound is achieved by some scheme, we say the scheme obtains the \textit{optimal degrees of freedom}.



Interference alignment in a time (or frequency) 
varying channel with finite or infinite symbol extension is the main focus of the work in \cite{Cadambe1, Cadambe2, Gou, Suh}.
For instance, interference alignment achieves the optimal degrees of freedom 
for the $K$ by $L\=2$ (or $K\=2$ by $L$) single antenna user X network with finite symbol extension 
\cite{Cadambe1}. For X networks with $K>2$ and $L>2$, interference alignment requires 
infinite symbol extension in order to be close to the outer bound \cite{Cadambe1}. 
In the case of constant channel coefficients, 
the spatial degrees of freedom have been investigated in \cite{Jafar2, Maddah, Jafar1, Cadambe2, Suh1, Honig1}. 
The optimal degrees of freedom of the two by two MIMO X channel has the optimal degrees of freedom of $\frac{4}{3}M$ when each 
node has $M>1$ antennas \cite{Jafar2, Maddah}.
With $M$ antennas at each transmitter and $N$ antennas at each receiver, 
Ref. \cite{Jafar1} characterizes the optimal degrees of freedom for the two user interference channel. 
Remarkably, simple zero forcing is sufficient to achieve the optimal degrees of freedom \cite{Jafar2, Jafar1}. 
The interference alignment in a three-user interference channel with $M$ antennas at each node yields the optimal 
degrees of freedom of $\frac{3M}{2}$ when $M$ is even (when $M$ is odd a 
two symbol extension is required to achieve $\frac{3M}{2}$) \cite{Cadambe2}. 
An achievable scheme where each user can obtain one degree of freedom for 
two cell network with a constant channel coefficient is the main focus of \cite{Suh1}.      
Necessary antenna dimension conditions for a linear scheme to provide one degree of freedom per user 
are formulated in terms of the number of users and the number of cells in \cite{Honig1}.
The general characterization of the optimal degrees of freedom for MIMO networks 
with constant channel coefficients still remains unknown. 

In this paper, we study the degrees of freedom for the $L$-cell and $K$-user MIMO MAC where 
the network consists of $L>1$ homogenous cells with $K\geq 1$ users per cell.  
Spatial resources are mainly utilized with constant channel coefficients to study the degrees of freedom. 
So, we do not consider symbol extension to utilize time or frequency resources.  
We first provide a degrees of freedom outer bound for the $L$-cell and $K$-user MIMO MAC. 
Then, two schemes that achieve the degrees of freedom outer bound are constructed for $L=2$, i.e., two-cell case.  
The first scheme is a simple transmit zero forcing with $N\=K\beta$ and $M\=K\beta\+\beta$ and 
the second one is a \emph{null space interference alignment} with $N\=K\beta\+\beta$ and $M\=K\beta$, 
where $\beta$ is a positive integer. 
The \emph{optimal degrees of freedom} for two-cell MIMO MAC is shown to be 
$2K\beta$, when $M\=K\beta$ and $N\=K\beta\+\beta$ 
or $M\=K\beta\+\beta$ and $M\=K\beta$.

The keys to the degrees of freedom outer bound are to construct a subset network of the $L$-cell and $K$-user MIMO MAC and
to allow full cooperation between users and their corresponding basestations in a certain manner. 
When $N>M$ (deplorable uplink scenario), the achievable scheme is based on null space interference alignment. 
Null space interference alignment relies on each base 
station using a carefully chosen null space plane to project the out-of-cell interference to a lower dimensional
space than its original dimension so that the null space plane can jointly mitigate the 
degrees of freedom loss. 
The converse and achievability lead to the optimal degrees 
of freedom characterization for the two cell case. 
Notice that by the uplink and downlink duality, the uplink scenario is converted to the 
downlink scenario as shown in \cite{Suh1, Honig1}. Thus, without loss of generality, the degrees of freedom results 
in this paper are also applicable to the downlink scenario. 

The organization of the paper is as follows. Section \ref{section2} describes 
the system model for the $L$-cell and $K$-user MIMO MAC. In Section \ref{section3}, we derive a  
degrees of freedom outer bound for the multicell MIMO MAC when $K\!\geq \!1$ and $L\!>\!1$. 
Studying the achievability and optimal degrees of freedom for the two-cell MIMO MAC is in Section 
\ref{section4}. The paper is concluded in Section \ref{section_conlusions}.

\section{$L$-cell and $K$-user MIMO MAC} \label{section2}

The network consists of $L$ homogeneous cells. In each cell there are $K\!\geq\! 1$ users 
and one base station where the user (transmitter) has $M\!\geq\! 1$ antennas and the base station
(receiver) is equipped with $N\!\geq\! 1$ antennas. We introduce an index $\ell k$ to denote 
the user $k$ in the cell $\ell$ for $\ell\in\cL$ and $k\in\cK$ where
$\cL=\{1,\ldots, L\}$ and $\cK=\{1,\ldots, K\}$,
respectively. In the $L$-cell and $K$-user MIMO MAC, a total of $LK$ users simultaneously transmit data to destined base stations. 
For instance, a three-cell and two-user MIMO MAC is shown in Fig. \ref{Fig1}. 
Here, user indices $\{ \ell 1, \ell 2 \}$  denote users in cell $\ell$.    
The input-output relation of the channel at the $t$th discrete time slot is described by 
\beq
\by_{m}(t)= \sum_{\ell=1}^{L}\sum_{k=1}^{K}\bH_{m,\ell k}\bx_{\ell k}(t)+\bz_{m}(t), \ \forall m \in \cL \label{channel_model1}
\eeq
where $\by_{m}(t)\!\in\!\C^{N\times 1}$ and $\bz_{m}(t)\!\in\! \C^{N\times 1}$ denote the received signal 
vector and additive noise vector at the base station $m$, respectively.
Each entry of $\bz_{m}(t)$ is independent and identically distributed (i.i.d.) with $\cC\cN(0,1)$.
The vectors $\bx_{\ell k}(t)\!\in\!\C^{M\times 1}$ represents the channel input at user $\ell k$. 
The $\bx_{\ell k}(t)$ is subject to an average power constraint
\beq
\tr\lp E\lS \bx_{\ell k}(t)\bx_{\ell k}^*(t) \rS \rp \leq \rho, \ \forall k\in\cK, \forall \ell\in\cL \label{power_const1}
\eeq
where $\rho$ represents SNR.  
The matrix $\bH_{m,\ell k}\!\in\!\C^{N\times M}$ denotes the channel with constant coefficients from user $\ell k$
to base station $m$. 
In \eqref{channel_model1}, the matrices $\{\bH_{m,mk}\}_{k\in\cK}$ represent the desired signal 
channel at base station $m$ while the matrices $\{\bH_{m,\ell k}\}_{\ell \in\cL\bsh m, k\in\cK}$ 
carry out-of-cell interference to base station $m$. 
The channel matrices are realized from i.i.d. and continuous distribution such that each entry 
is i.i.d. and the distribution of each entry has compact support.
This channel model almost surely ensures all channel matrices are \emph{nondegenerate}, i.e., 
$\rank(\bH_{m,\ell k})\=\min(M,N)$\footnote{The $\rank(\bA)$ for 
$\bA\in\C^{N \times M}$ is defined as $\rank(\bA)=\dim(ran(\bA))$
where $ran(\bA)=\{ \by\in\C^{N\times 1}: \by=\bA\bx, \forall\bx\in\C^{M\times 1} \}$ 
and $\dim(\cA)$ extracts the number of basis of the subspace $\cA$. 
Null space of $\bA$ is defined by $null(\bA)=\{ \ba\in\C^{M\times 1}:\mathbf{0}=\bA\ba \}$.}
and the event for $(\bH_{m,\ell k})_{i,j}\=\infty$ is negligible. 
Throughout the paper we assume perfect channel knowledge of all links at all nodes. 

Define $W_{\ell k}(\rho)$ as a message from user $\ell k$ to the destined base station $\ell$. 
The message $W_{\ell k}(\rho)$ is uniformly distributed in a $(n,2^{nR_{\ell k}(\rho)})$ codebook 
$\cZ(\rho)$ and messages at different users are independent each other.
The message $W_{\ell k}(\rho)$ is mapped to $\bx_{\ell k}$ in \eqref{channel_model1}. 
Then, the information transfer rate $R_{\ell k}(\rho)$ of message $W_{\ell k}(\rho)$ is 
said to be achievable if the rate of decoding error can be made arbitrarily small by choosing 
appropriate channel block length $n$. The capacity region $\cC(\rho)$ is defined as the convex closure of all 
achievable rate tuples $\{R_{\ell k}(\rho)\}_{\ell \in\cL, k\in\cK}$.    
We define spatial degrees of freedom of multicell MIMO MAC as
\beq
\Lambda_d \d4&=&\d4 \lim_{\rho\rightarrow\infty} 
                                     \sum\limits_{ \ls R_{\ell k}(\rho) \rs_{\ell\in\cL, k\in\cK}\in\cC(\rho)}  
                                      \frac{R_{lk}(\rho)}{\log(\rho)}. \label{definition_dof}
\eeq
The expression in \eqref{definition_dof} approximates the capacity region when the available power  
$\rho$ is arbitrary large. 
In the absence of exact knowledge of the capacity region, the degrees of freedom provides insight into network MIMO performance trends.
For the sake of simplicity, in what follows, we omit the $\rho$ attached to $W_{\ell k}(\rho)$ and $R_{\ell k}(\rho)$.
In addition, with an abuse of notation, $\by_m(t)$, $\bz_m(t)$, and $\bx_{\ell k}(t)$ in \eqref{channel_model1} 
are simplified to $\by_m$, $\bz_m$, and $\bx_{\ell k}$. 


\begin{figure}[t]
\centering
\includegraphics[width=6.0cm, height=6.0cm]{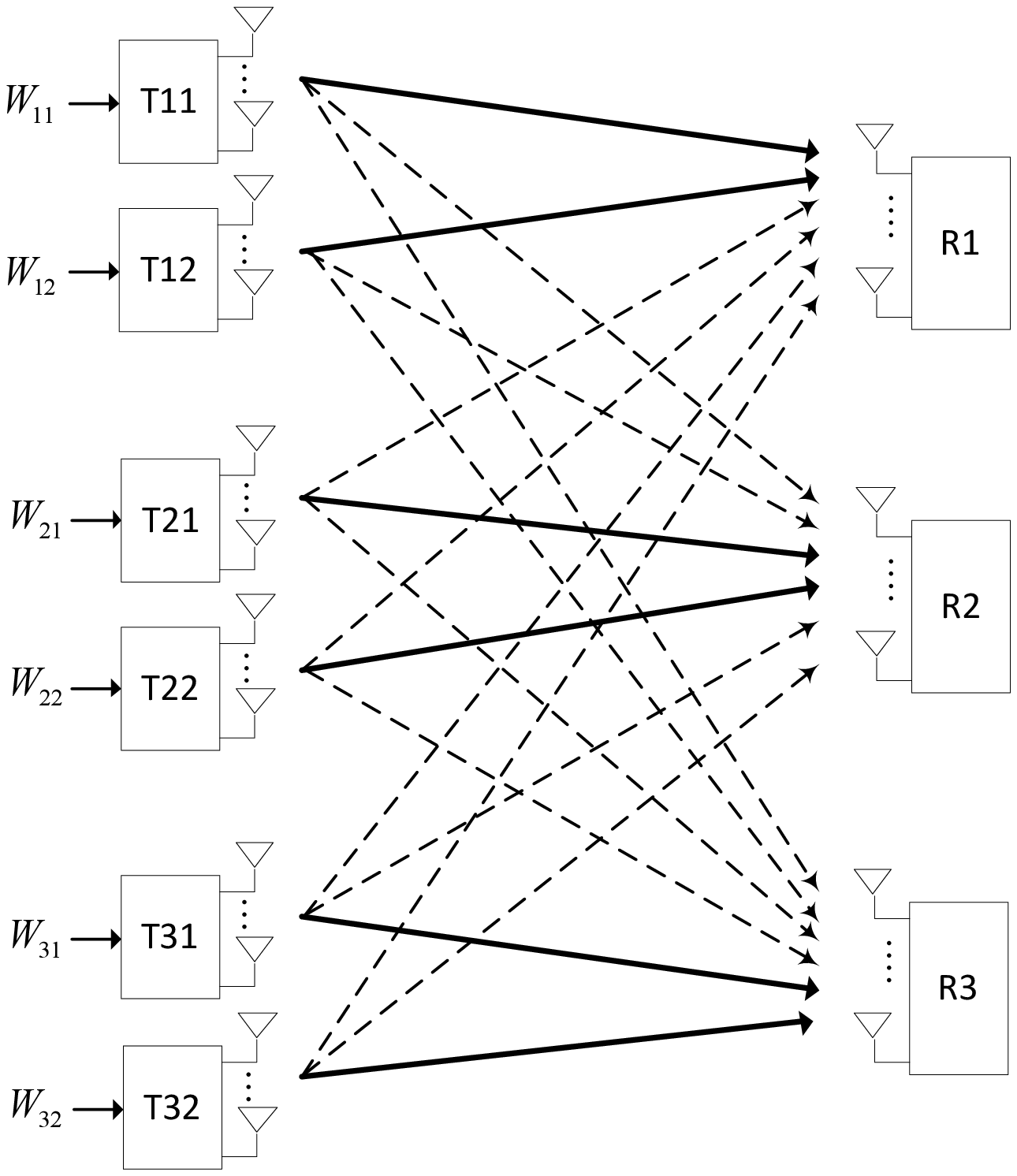}
\caption{Multicell MIMO MAC with $L=3$ and $K=2$.}
\label{Fig1}
\end{figure}

\section{Degrees of Freedom Outer Bound of the $L$-cell and $K$-user MIMO MAC} \label{section3}

A degrees of freedom outer bound for the $L$-cell and $K$-user MIMO MAC where the transmitter and receiver have
$M$ and $N$ antennas, respectively, is characterized as follows.  

\begin{theorem} \label{theorem_MC_MAC_outer_bound}
The degrees of freedom of the $L$-cell and $K$-user MIMO MAC with $L>1$ and $K\geq 1$, whose channel matrices are nondegenerate, 
is bounded by
\beq
\Lambda_d \!\leq\! \min\lp KLM, LN, \lambda_d\rp . \label{theorem_MC_MAC_outer_bound_equation}
\eeq 
where
\beq
\lambda_d \= KL\min\!\lp\! \frac{\max(KM,(L\-1)N)}{K\+L\-1},  \frac{\max((L\-1)M,N)}{K\+L\-1} \!\rp \nonumber 
\eeq
\end{theorem}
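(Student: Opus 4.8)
The plan is to prove the three terms of the minimum separately. The bounds $\Lambda_d\le KLM$ and $\Lambda_d\le LN$ are cut-set bounds: allowing all $KL$ transmitters to cooperate (respectively, all $L$ base stations to cooperate) turns the channel into a point-to-point MIMO link with $KLM$ transmit and $LN$ receive antennas, which has $\min(KLM,LN)$ degrees of freedom, and cooperation can only enlarge the degrees-of-freedom region; hence $\Lambda_d\le\min(KLM,LN)$.

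For $\Lambda_d\le\lambda_d$ the idea is to collapse the network, in many symmetric ways, onto a two-user MIMO interference channel and then average the resulting bounds. Fix a ``home cell'' $m\in\cL$ and, for each other cell $\ell\in\cL\bsh m$, a ``representative user'' index $j_\ell\in\cK$. Enhance the network by (i) letting the $K$ users of cell $m$ cooperate into one transmitter with $KM$ antennas; (ii) letting the $L-1$ representative users $\{\ell j_\ell:\ell\ne m\}$ cooperate into one transmitter with $(L-1)M$ antennas; (iii) revealing to every base station the transmitted signals of all non-representative users, so those signals can be cancelled at every receiver; and (iv) pairing base station $m$ ($N$ antennas) against the remaining $L-1$ base stations jointly ($(L-1)N$ antennas) as the two receivers, now required only to decode $\{W_{mk}\}_{k\in\cK}$ and $\{W_{\ell j_\ell}\}_{\ell\ne m}$, respectively. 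Each of (i)--(iv) can only enlarge the degrees-of-freedom region, so any achievable tuple of the original network, restricted to these $K+L-1$ messages, stays achievable; and after the cancellation in (iii) the enhanced network is exactly a two-user MIMO interference channel with antenna profile $(KM,(L-1)M,N,(L-1)N)$. Applying the converse part of the two-user MIMO interference-channel degrees-of-freedom result \cite{Jafar1} in the general-antenna form (a converse, so no genericity assumption is needed here), and writing $B:=\min\bigl(\max(KM,(L-1)N),\,\max((L-1)M,N)\bigr)$ for the two interference-type terms of that bound, gives
\[ \sum_{k\in\cK}d_{mk}+\sum_{\ell\in\cL\bsh m}d_{\ell j_\ell}\ \le\ B . \]

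It remains to sum this inequality over all $L$ choices of $m$ and all $K^{L-1}$ choices of the representative tuple $(j_\ell)_{\ell\ne m}$. The right side becomes $LK^{L-1}B$. On the left, for fixed $m$ the home term $\sum_{k}d_{mk}$ is repeated over all $K^{L-1}$ tuples, contributing $K^{L-1}\Lambda_d$ in total; and each $d_{\ell j}$ enters the representative sum once for each of the $L-1$ home cells $m\ne\ell$ and each of the $K^{L-2}$ completions of the tuple, contributing $(L-1)K^{L-2}\Lambda_d$. Thus the left side equals $K^{L-1}\Lambda_d+(L-1)K^{L-2}\Lambda_d=K^{L-2}(K+L-1)\Lambda_d$, and dividing through gives $\Lambda_d\le\frac{KL}{K+L-1}B=\lambda_d$. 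Together with the two cut-set bounds this yields $\Lambda_d\le\min(KLM,LN,\lambda_d)$.

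The one place that needs care is step (iii): I must check that once the non-representative signals are revealed the residual problem is genuinely a two-user MIMO interference channel --- two transmit/receive pairs, each receiver charged only with its own message subset and able to subtract everything else --- so that the two-user converse applies verbatim; the rest is bookkeeping (the combinatorial count above, and the remark that stacking cooperating receivers' channels preserves whatever the converse needs). I would also note that the grouping is not forced: pitting cell $m$ against the union of all other cells instead gives the extra bound $\Lambda_d\le\max(KM,(L-1)N)$ with no averaging, which can be intersected with $\lambda_d$; the symmetric expression $\lambda_d$ is, however, exactly the one that will be matched by the achievable schemes for $L=2$.
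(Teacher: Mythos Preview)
Your proposal is correct and follows essentially the same strategy as the paper: reduce to a two-user MIMO interference channel by grouping the $K$ users of a designated ``home'' cell against one representative user from each of the other $L-1$ cells (with the corresponding base stations likewise grouped), apply the Jafar--Fakhereddin converse, and then average the resulting inequalities over all symmetric choices to recover the sum-DoF bound $\frac{KL}{K+L-1}B$.

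There are two minor differences in execution, neither of which changes the argument materially. First, the paper fixes the \emph{same} representative index $k$ in every foreign cell, so the message subsets are indexed by $(\ell,k)\in\cL\times\cK$ and there are only $KL$ inequalities; each message then appears in exactly $K+(L-1)$ of them, and the averaging is a one-line count. You instead allow independent representative indices $(j_\ell)_{\ell\ne m}$, giving $LK^{L-1}$ inequalities and a slightly more elaborate count---but since every inequality has the same right-hand side $B$, the extra freedom is redundant and both normalizations collapse to the same factor $\frac{KL}{K+L-1}$. Second, the paper handles the non-representative users by simply passing to the ``subset network'' containing only the selected $K+L-1$ transmitters (removing interferers can only enlarge the DoF region of the retained messages), whereas you keep all users and use a genie that reveals the non-representative signals so they can be cancelled. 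Both reductions are standard and yield the identical two-user interference channel with antenna profile $(KM,N)$ versus $((L-1)M,(L-1)N)$. Your version is arguably more explicit about why the reduction is legitimate; the paper's is more economical in the bookkeeping.
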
 
\begin{proof}
A trivial outer bound is obtained by allowing perfect cooperation among $KL$ users and their corresponding $L$ basestations 
of the $L$-cell and $K$-user MIMO MAC as
\beq
\Lambda_d \leq \min\lp KLM, NL \rp. \label{3.24}
\eeq

The main ingredient to formulate the outer bound in \eqref{theorem_MC_MAC_outer_bound_equation} is to split the whole 
message set $\cW\=\ls W_{\ell k} \rs_{\ell\in\cL, k\in\cK}$ into smaller subsets, characterize the degrees of freedom 
outer bound associated with this small subset, and combine all degrees of freedom characterizations associated with 
all of the subsets to compute \eqref{theorem_MC_MAC_outer_bound_equation}. 

First, we define a network which is a subset of $L$-cell and $K$-user MIMO MAC.
The subset network is defined as a $L$-cell heterogeneous MIMO uplink channel, where $L-1$ cells (among $L$ cells) form 
the $(L-1)$-user MIMO interference channel and single cell forms the $K$-user MIMO MAC. 
We refer to this network as the $(L\-1,1)$ \emph{uplink HetNet}. 
Fig. \ref{Fig6} represents $(2,1)$ uplink HetNet where cell $1$ is a $2$-user MIMO MAC and a cell $2$ and cell $3$ constitute
$2$-user MIMO interference channels. 
The $(L\-1,1)$ uplink HetNet is formed by designating the $\ell$th cell (among $L$ cells) as the $K$-user MIMO MAC.
Then, the other $L\-1$ cells in $\cL\bsh\ell$ form $(L\-1)$-user MIMO interference channels by selecting the $k$th user in 
each of the cells in $\cL\bsh\ell$, i.e., the index set for the $L\-1$ users is $\ls 1k, \ldots, \ell\-1 \ k, \ell\+1 \ k, \ldots, Lk  \rs$.  

The message set corresponding to the $K$-user MIMO MAC is $\ls W_{\ell q} \rs_{q\in\cK}$.  
The message set associated with $(L\-1)$-user MIMO interference channel is given by 
$\ls W_{pk} \rs_{k\in\cL\bsh\ell}$. 
Then, the messages set of $(L\-1,1)$ HetNet is defined by 
\beq
\cW^{\ell k} = \ls W_{\ell q} \rs_{q\in\cK} \cup \ls W_{pk} \rs_{k\in\cL\bsh\ell}. \label{message_spliting}
\eeq
The degrees of freedom outer bound is first argued for each of the $LK$ sets $\ls \cW^{\ell k} \rs_{\ell\in\cL, k\in\cK}$, and
$KL$ outer bounds are combined by accounting for overlapped messages.     

Now allow perfect cooperation between $L\-1$ users and corresponding $L\-1$ receivers of the $(L\-1)$-user MIMO interference channel. 
Then, if we assume perfect cooperation between $K$ users in cell $\ell$, 
the $(L-1,1)$ uplink HetNet with $\cW^{\ell k}$ becomes two-user interference channel,  
where the first link has the transmit and receive antenna pair $\lp KM, N \rp$ and the second link 
consists of $\lp  (L\-1)M,  (L\-1)N \rp$ transmit and receive antenna pair. 
The optimal spatial degrees of freedom of the $(M_1, N_1)$, $(M_2, N_2)$ two-user MIMO interference channel
is characterized by $\min( M_1+M_2, N_1+N_2, \max(M_1, N_2), \max(M_2, N_1) )$ in \cite{Jafar2}.    
Thus, by utilizing this result in \cite{Jafar2}, the degrees of freedom outer bound associated with message set 
$\cW^{\ell k}$ is given by 
\beq
\min \d4&(&\d4 (K\+L\-1)M, LN,  \nonumber \\ 
\d4& &\d4 \hspace{0.5cm} \max\lp KM, (L\-1)N \rp, \max\lp (L\-1)M,N \rp  ). \label{3.15}
\eeq

Since the bound in \eqref{3.15} does not alter for the message set $\cW^{\bar{\ell} \bar{k}}$ with $\bar{\ell}\neq\ell$ and 
$\bar{k}\neq k$, the degrees of freedom outer bound for the other message set $\big\{ \cW^{\bar{\ell} \bar{k}} \big\}_{\bar{\ell}\neq\ell, \bar{k}\neq k}$ 
is also determined by \eqref{3.15}. 
Notice that the message splitting in \eqref{message_spliting} results in 
total $KL$ message subsets and each message overlapped $K\+L\-1$ times over $KL$ message subsets.  
Thus, adding up all the inequalities associated with $\ls \cW^{\ell k} \rs_{\ell\in\cL, k\in\cK}$ yields the total degrees
of freedom outer bound as
\beq
\d4\d4\d4 \Lambda_d \d4&\leq&\d4 KL \min\!\bigg(\! M, \frac{LN}{K\+L\-1},  \nonumber \\ 
          \d4& &\d4 \hspace{1.0cm} \frac{\max\!\lp KM, (L\-1)N \rp}{K\+L\-1}, \frac{\max\!\lp (L\-1)M,N \rp}{K\+L\-1} \! \bigg). \label{3.22}
\eeq
Combining two bounds in \eqref{3.24} and \eqref{3.22} and realizing that 
$\frac{KL}{K\+L\-1}LN \geq LN$
for $K,L\geq 1$ yield the outer bound result in \eqref{theorem_MC_MAC_outer_bound_equation}.
\begin{figure}[t]
    \centering
    \includegraphics[width=6.0cm, height=6.0cm]{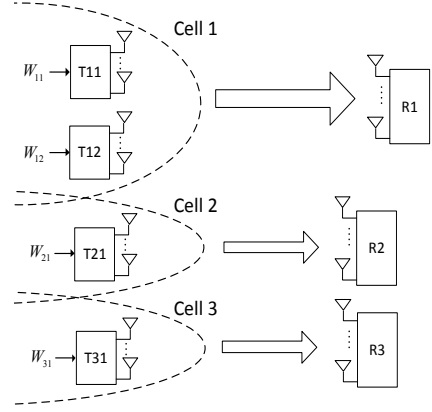}
    \caption{Heterogeneous network consisting of a $2$-user MIMO MAC (i.e., cell $1$) and 
             $2$-user MIMO interference channel (i.e., cell $2$ and $3$).}
    \label{Fig6}
\end{figure}
\end{proof}

In what follows, we will quote the result in this section to characterize the optimal degrees of freedom for 
two-cell and $K$-user MIMO MAC.  
\section{Achievability and Optimal Degrees of Freedom for Two-cell and $K$-user MIMO MAC} \label{section4}

Our base line algorithm is to explore the feasibility of the linear scheme utilizing the spatial dimensions under 
zero interference constraints. 
The achievable schemes utilize linear precoder at the transmitter and linear 
postprocessing linear filter $\bP_m\in\C^{K\beta\times N}$ at the receiver $m$ to generate $\beta$ interference 
free dimensions for each of users. 
The required antenna dimensions $M$ and $N$ for achieving the optimal degrees of freedom are found as a linear 
function of $K$ and the number of transmit streams. 

\begin{theorem} \label{theorem_MC_MAC_optimal_dof}
The two-cell and $K$-user MIMO MAC with nondegenerate channels, where the 
user and base station have $M\=K\beta$ and $N\=K\beta\+\beta$ antennas or $M\=K\beta\+\beta$ 
and $N\=K\beta$ antennas, respectively, has the optimal degrees of freedom of $2K\beta$ where $\beta$ is positive integer. 
\end{theorem}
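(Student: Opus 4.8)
The plan is to prove the theorem by establishing matching upper and lower bounds. The upper bound is immediate: specializing Theorem~\ref{theorem_MC_MAC_outer_bound} to $L=2$ and substituting either $(M,N)=(K\beta,K\beta\+\beta)$ or $(M,N)=(K\beta\+\beta,K\beta)$, one checks that $\min(KLM,LN,\lambda_d)$ evaluates to $2K\beta$. Concretely, for $N\=K\beta\+\beta$, $M\=K\beta$ we have $LN=2K\beta\+2\beta$, $KLM=2K^2\beta$, and $\lambda_d=2K\min\!\big(\frac{\max(K^2\beta,K\beta\+\beta)}{K\+1},\frac{\max(K\beta,K\beta\+\beta)}{K\+1}\big)=2K\cdot\frac{K\beta\+\beta}{K\+1}=2K\beta$; the $M\leftrightarrow N$ swapped case is symmetric. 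So the only real work is achievability: I must exhibit a linear scheme delivering $\beta$ interference-free dimensions to each of the $2K$ users.

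For the case $N\=K\beta\+\beta>M\=K\beta$ (the "deplorable uplink" / null space interference alignment case), the idea is as follows. Each user $\ell k$ sends $\beta$ streams through a precoder $\bV_{\ell k}\in\C^{M\times\beta}$, so cell $\ell$ presents a total of $K\beta=M$ desired stream-directions at its own base station, which is fine since base station $\ell$ has $N=K\beta\+\beta$ dimensions. The obstruction is the out-of-cell interference: the $K\beta$ interfering streams from the other cell arrive at base station $\ell$ and, generically, span a $K\beta$-dimensional subspace, leaving only $N-K\beta=\beta$ clean dimensions — not the $K\beta$ we need. The fix is for base station $\ell$ to apply a postprocessing filter $\bP_\ell\in\C^{K\beta\times N}$ whose row space is a carefully chosen $K\beta$-dimensional "null space plane," and to force the precoders of the \emph{other} cell to compress their joint interference image under $\bP_\ell$ down to dimension $\beta$ rather than $K\beta$. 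That is, I want $\bP_\ell\big[\bH_{\ell,\bar\ell 1}\bV_{\bar\ell 1}\;\cdots\;\bH_{\ell,\bar\ell K}\bV_{\bar\ell K}\big]$ to have rank $\beta$, where $\bar\ell$ is the other cell. This is exactly an interference alignment condition: the $K$ interference blocks, each $\beta$-dimensional, must be squeezed into a common $\beta$-dimensional subspace after projection by $\bP_\ell$. Counting: $\bP_\ell$ has $(N-K\beta)\cdot N = \beta N$ free complex parameters (as a choice of subspace, $\beta(N-\beta)$), and enforcing the alignment of $K-1$ extra blocks into the span of the first costs roughly $(K-1)\cdot(\text{codim})\cdot\beta$ equations; I would set up the eigenvector-style construction from the interference-alignment literature (à la \cite{Cadambe2, Suh1}) so that $\bV_{\bar\ell k}$ for $k\geq2$ are obtained from $\bV_{\bar\ell 1}$ and the channel matrices, and $\bP_\ell$ is then the left null space of the aligned interference together with enough extra rows. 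Finally I must verify the desired-signal full-rank condition: $\bP_\ell\big[\bH_{\ell,\ell 1}\bV_{\ell 1}\;\cdots\;\bH_{\ell,\ell K}\bV_{\ell K}\big]$ must be an invertible $K\beta\times K\beta$ matrix, which holds almost surely by nondegeneracy and genericity of the channel realizations.

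For the reciprocal case $M\=K\beta\+\beta>N\=K\beta$, the scheme is "simple transmit zero forcing": each user has $M=K\beta\+\beta$ transmit antennas, more than the $K\beta=N$ dimensions at any base station, so user $\ell k$ can zero-force its interference into the \emph{other} base station while still delivering $\beta$ dimensions to its own. Precisely, I choose $\bV_{\ell k}\in\C^{M\times\beta}$ with columns lying in $null(\bH_{\bar\ell,\ell k})$; since $\bH_{\bar\ell,\ell k}\in\C^{N\times M}$ with $M>N$, this null space has dimension $M-N=\beta$, exactly enough. With all out-of-cell interference nulled, base station $\ell$ sees only its own $K$ users contributing $K\beta=N$ stream-directions $\{\bH_{\ell,\ell k}\bV_{\ell k}\}$, which are linearly independent almost surely by nondegeneracy, so a zero-forcing receiver $\bP_\ell$ recovers all $2K\beta$ streams cleanly. (This direction could equivalently be deduced from the $M\!>\!N$ case by uplink–downlink duality, as noted after Theorem~\ref{theorem_MC_MAC_outer_bound}, but the direct construction is cleaner to state.)

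I expect the null-space interference alignment construction in the $N>M$ case to be the main obstacle: the transmit zero-forcing and both upper bounds are essentially bookkeeping, but producing an explicit feasible $(\{\bV_{\ell k}\},\bP_\ell)$ and rigorously checking both the rank-$\beta$ alignment condition and the rank-$K\beta$ desired-signal condition simultaneously — ensuring the generic-channel argument really makes the relevant determinants nonzero with probability one — is where the care is needed. The key insight to exploit is that because the interference from cell $\bar\ell$ occupies $K\beta$ dimensions at base station $\ell$ but we only need to protect $K\beta$ desired dimensions out of $N=K\beta\+\beta$, there is a $\beta$-dimensional slack; aligning the $K$ interferers of cell $\bar\ell$ down to $\beta$ joint dimensions post-projection is precisely what converts that slack into enough room, and the precoder/filter degree-of-freedom count balances out to make this generically solvable.
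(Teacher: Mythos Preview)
Your converse and your $M\=K\beta\+\beta>N\=K\beta$ transmit-zero-forcing achievability match the paper's arguments essentially verbatim. The gap is in the $N\=K\beta\+\beta>M\=K\beta$ case.

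You ask that the post-projection interference matrix $\bP_\ell\big[\bH_{\ell,\bar\ell 1}\bV_{\bar\ell 1}\;\cdots\;\bH_{\ell,\bar\ell K}\bV_{\bar\ell K}\big]$ have \emph{rank $\beta$}, while simultaneously requiring the desired-signal matrix $\bP_\ell\big[\bH_{\ell,\ell 1}\bV_{\ell 1}\;\cdots\;\bH_{\ell,\ell K}\bV_{\ell K}\big]$ to be an invertible $K\beta\times K\beta$ block. But $\bP_\ell$ has only $K\beta$ output rows: a $\beta$-dimensional residual interference image and a full-rank $K\beta$-dimensional desired image cannot be linearly separated inside a $K\beta$-dimensional space, and any further projection that kills the residual $\beta$ dimensions drops the desired signal to rank at most $(K\-1)\beta$, giving only $2(K\-1)\beta$ total. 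The correct target is rank \emph{zero}: you need $\bP_\ell\bH_{\ell,\bar\ell k}\bV_{\bar\ell k}=\bzero$ for every $k$.

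The paper gets this without any Cadambe--Jafar eigenvector chain and without coupling the precoders $\bV_{\bar\ell k}$ to one another. It builds $\bP_m$ first, from the interference channels alone: for each $k$, let $\bN_{m,\barm k}\in\C^{N\times\beta}$ be a basis of $null\big(\bH_{m,\barm k}^*\big)$ (dimension $N\-M=\beta$), and set
\[
\bP_m=\big[\,\bN_{m,\barm 1}\ \bN_{m,\barm 2}\ \cdots\ \bN_{m,\barm K}\,\big]^*\in\C^{K\beta\times N}.
\]
Because the $k$th block of $\beta$ rows of $\bP_m$ annihilates $\bH_{m,\barm k}$, the square matrix $\bP_m\bH_{m,\barm k}\in\C^{K\beta\times K\beta}$ has $\beta$ identically zero rows and hence a $\beta$-dimensional right null space; choose $\bV_{\barm k}$ to span it. Then $\bP_m\bH_{m,\barm k}\bV_{\barm k}=\bzero$ for all $k$, i.e.\ the out-of-cell interference is \emph{completely} nulled after $\bP_m$, and since $\bP_m$ and $\{\bV_{mk}\}$ are functions only of the cross-link channels, independence from $\{\bH_{m,mk}\}$ gives the full-rank desired-signal condition almost surely via Lemma~\ref{lemma_linear_independent_columns}. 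The ``null space interference alignment'' in the paper's title thus refers to stacking the \emph{left} null spaces of the interference channels into the rows of $\bP_m$ (equivalently, forcing $null(\bP_m)$ to intersect each $ran(\bH_{m,\barm k})$ in a $\beta$-dimensional subspace, cf.\ Lemma~\ref{lemma_equivalent_form}), not to squeezing the post-projection interference into a common $\beta$-dimensional subspace as you proposed. Your degree-of-freedom count and the gesture toward \cite{Cadambe2,Suh1} do not supply this construction, and the ``derive $\bV_{\bar\ell k}$ for $k\geq 2$ from $\bV_{\bar\ell 1}$'' step is both unnecessary and headed in the wrong direction.
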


\subsection{Converse of Theorem \ref{theorem_MC_MAC_optimal_dof}} \label{inner_bound_MC_MAC_two_cell}

When $L=2$, the outer bound in  \eqref{theorem_MC_MAC_outer_bound_equation} yields
\beq
\Lambda_d \!\leq\! 2\min\!\!\lp\! KM,\! N,\! \frac{K\!\max(\!KM,N\!)}{K\+1},\! \frac{K\!\max(\!M,N\!)}{K\+1} \!\rp. \label{4.0}
\eeq
Plugging $M\=K\beta\+\beta$ and $N\=K\beta$ in \eqref{4.0} returns
\beq
2\min\!\lp\! K(K\+1)\beta, K\beta, \frac{K^2 (K\+1)\beta}{K\+1}, K\beta \!\rp = 2K\beta. \label{4.0.1}
\eeq
When $M\=K\beta$ and $N\=K\beta\+\beta$, we have
\beq
2\min\!\lp\! K^2\beta, (K\+1)\beta, \frac{2K^3}{K\+1}, K\beta\!\rp = 2K\beta. \label{4.0.2}
\eeq
Combining two bounds in \eqref{4.0.1} and \eqref{4.0.2} verifies the converse. 
\subsection{Achievability of Theorem \ref{theorem_MC_MAC_optimal_dof}} \label{inner_bound_MC_MAC_two_cell}
The achievability is argued by showing that $\beta$ interference free dimensions per user are resolvable
by constructing achievable linear schemes. 
 
Independently encoded $\beta$ streams are transmitted as 
$\bx_{mk}\= \bW_{mk}\bs_{mk}$, $m\in\cL$ and $k\in\cK$
from user $mk$ to base station $m$ where $\bs_{mk}\=\lS s_{mk,1} \ldots s_{mk,\beta} \rS^T$$\in$$\C^{\beta\times 1}$ 
is the symbol vector carrying message $W_{mk}$ and 
$\bW_{mk}$$\in$$\C^{M\times \beta}$ denotes a linear precoding matrix. The 
signal received at base station $m$ can then be written as
\beq
\by_m\=\sum_{k=1}^{K}\!\bH_{m,mk}\bW_{mk}\bs_{mk}\+\sum_{k=1}^{K}\bH_{m,\barm k}\bW_{\barm k}\bs_{\barm k}\+\bz_m.  \label{4.1}
\eeq
where $\barm$ is defined as $\barm$$=$$\cL\bsh m$ for $\cL$$=$$\{1,2\}$. 

When $M\= K\beta\+\beta$ and $N\=K\beta$, user $\barm k$ picks the precoding matrix 
$\bW_{\barm k}$ such that  
\beq
\span\lp \bW_{\barm k} \rp \subset null\lp \bH_{m,\barm k} \rp. \label{4.2} 
\eeq
Since $\bH_{m,\barm k}$$\in$$\C^{K\beta\times (K\beta\+\beta)}$ is drawn from i.i.d. continuous distribution,
$\bW_{\barm k}$$\in$$\C^{M\times \beta}$ with $\rank(\bW_{\barm k})\=\beta$ can be found almost surely  
such that $\eqref{4.2}$ for all $k\!\in\!\cK$.

Applying percoders $\ls\bW_{\barm k}\rs_{k\in\cK, \barm\in\cL}$ designed by \eqref{4.2} 
to \eqref{4.1} gives the received vector at base station $m$ as 
\beq
\by_m=\sum\limits_{k\in\cL}\bH_{m,mk}\bW_{mk}\bs_{mk}+\bz_m. \nonumber
\eeq
The decodability of $K\beta$ streams from $\by_m$ requires 
$\bG_m$$=$$\lS \bH_{m,m1}\bW_{m1} \ \cdots \ \bH_{m,mK}\bW_{mK} \rS $$\in$$\C^{K\beta\times K\beta}$ to be a full rank. 
Since $\bW_{mk}$ in \eqref{4.2} is based on $\bH_{\barm,m k}$, $\bW_{mk}$ is mutually independent of $\bH_{m,mk}$.
Then, by Lemma \ref{lemma_linear_independent_columns} in Appendix \ref{appendix_linear_independent_columns}, 
$\bH_{m,mk}\bW_{mk}$$\in$$\C^{K\beta\times \beta}$ is a full rank and spans $\beta$-dimensional space with probability one. 
Since $\ls \bH_{m,mk}\bW_{mk} \rs_{k\in\cK}$ are independently realized by continuous distributions and 
each $\bH_{m,mk}\bW_{mk}$ spans $\beta$-dimensional subspace, the aggregated channel $\bG_m$$\in$$\C^{K\beta\times K\beta}$ 
spans $K\beta$-dimensional space almost surely. This ensures achievability of $2K\beta$ degrees of freedom for 
two-cell MIMO MAC.

To argue the achievability for $M\=K\beta$ and $N\=K\beta\+\beta$, define an out-of-cell 
interference alignment plane at base station $m$ as $\bP_m$$\in$$\C^{K\beta\times (K\beta\+\beta)}$. 
Denote a projected out-of-cell interference channel at the base station $m$ 
as $\bP_{m}\bH_{m,\barm k}$$\in$$\C^{K\beta\times K\beta}$, $k \in \cK$. Transmitter 
$\barm k$ for $k \in \cK$ designs its precoder $\bW_{\barm k}$ 
such that
$\span\lp \bW_{\barm k} \rp$$ \subset $$null\lp \bP_{m}\bH_{m,\barm k} \rp$ with $\rank(\bW_{\barm k})$$=$$\beta$
whose necessary and sufficient condition is 
\beq
\dim\lp null\lp \bP_{m}\bH_{m,\barm k} \rp \rp= \beta, \  k\in\cK. \label{4.4} 
\eeq 
Since $\bP_m\bH_{m,\barm k}$ is $K\beta\times K\beta$, it is not straightforward to directly extract $\beta$-dimensional
null space from the effective channel $\bP_m\bH_{m,\barm k}$.  
However, we show in the following that extracting $\beta$-dimensional null space from $\bP_{m}\bH_{m,\barm k}\in\C^{K\beta\times K\beta}$
is possible by aligning the null spaces of the out-of-cell interference $\ls \bH_{m,\barm k} \rs_{k\in\cK}$ to the row space of $\bP_{m}$, 
which is referred to as \emph{null space interference alignment}.

Followed by Lemma \ref{lemma_equivalent_form} in Appendix \ref{appendix_equivalent_form}, \eqref{4.4} is restated as
\beq
\dim\lp ran\lp \bH_{m,\barm k} \rp \cap null\lp\bP_m\rp \rp \= \beta, \ k \in \cK \label{4.5}.
\eeq 
This formulation suggests a relevant interpretation that if a $\beta$-dimensional column subspace of $\bH_{m,\barm k}$ lies in 
$null\lp \bP_m \rp$ or equivalently, if the $\beta$-dimensional row subspace of $\bP_{m}$ lies in 
$null\big( \bH_{m,\barm k}^* \big)$ for all $ k\in\cK$, \eqref{4.5} is conveniently accomplished. 
Thus, the feasible $\bP_m$ is a matrix whose row subspace has $\beta$-dimensional intersection subspace 
with the null space of $\{ \bH_{m,\barm k}^* \}_{k\in\cK}$.     
In what follows the feasibility of \eqref{4.5} 
is established by aligning $K\beta$ dimensional out-of-cell interference space to 
$(K-1)\beta$ dimensional subspace by using null space interference alignment.

Suppose a set of matrices $\{ \bH_{m,\barm k}^* \}_{k\in\cK}$ and corresponding null space basis  
$\ls \bN_{m,\barm k}\rs_{k\in\cK}$ where $\bN_{m,\barm k}$$\in$$\C^{(K\beta\+\beta)\times \beta}$.
To enable \eqref{4.5}, $\bP_m$$\in$$\C^{K\beta\times N}$ 
is formed by mapping $\beta$ columns of $\bN_{m,\barm k}$ to the $(k-1)\beta+1$th 
to $k\beta$th rows of $\bP_m$, i.e., $\bP_m$ is constructed by 
\beq
\bP_m \= \lS \bN_{m,\barm 1} \  \bN_{m,\barm 2} \ \cdots \bN_{m,\barm K} \rS^*. \label{4.6}
\eeq
Note that the construction in \eqref{4.6} with $\{\bN_{m,\barm k}\}_{k\in\cK}$
always ensures $\rank(\bP_m)\=K\beta$ and  
$\dim\lp null\lp \bP_{m}\bH_{m,\barm k} \rp \rp= \beta$, $k\in\cK$, $m\in\cL$.
The mapping from columns of $\bN_{m,\barm k}$ to rows of $\bP_m$ is not unique.
In fact, since the condition in \eqref{4.5} describes the required condition about the right matrix null space of $\bP_m$, 
multiplying any full rank matrix $\bPi\in\C^{K\beta\times K\beta}$ to the left side of $\bP_m$ does not change the dimension 
condition in \eqref{4.4}, i.e.,   
\beq
\dim\!\lp null\!\lp \bPi\bP_{m}\bH_{m,\barm k} \rp \!\rp\=\dim\!\lp null\!\lp \bP_{m}\bH_{m,\barm k} \rp\! \rp\= \beta,  k\in\cK. \nonumber
\eeq

Given $\ls \bP_m \rs_{m\in\cL}$ in \eqref{4.6}, we find $\bW_{\barm k}$ such that
$\span\lp \bW_{\barm k} \rp$$\subset$$ null\lp \bP_{m}\bH_{m,\barm k} \rp$ for $k\in\cK, \barm\in\cL$. 
Then, the projected channel output at the base station $m$ is given by
\beq
\bP_m\by_m \= \sum_{k=1}^{K}\bP_m\bH_{m,mk}\!\bW_{mk}\bs_{mk}\+\bP_m\bz_m \= \bP_m\bG_m \tbs_m \+ \tbz_m \nonumber
\eeq 
where $\bG_m$$=$$[ \bH_{m,m1}\!\bW_{m1}  \cdots \bH_{m,mK}\!\bW_{mK} ]$, 
$\tbz_m$$=$$\bP_m\bz_m$, and $\tbs_m$$=$$[\bs_{m1}^T \cdots \bs_{mK}^T]^T$.    
For decodability, we need to check that $\bP_m\bG_m$ has linearly independent columns. 
Note that $\bP_m$ and $\bG_m$ are based on continuous distribution and mutually independent. 
Thus, Lemma \ref{lemma_linear_independent_columns} verifies that $\Pr\big( \det\big(  \bP_m\bG_m \big)\=0 \big)\=0$ 
implying the decodability of $K\beta$ interference free streams per cell. 
This ensures $2K\beta$ degrees of freedom for two cell MIMO MAC.

\section{Conclusions} \label{section_conlusions}
We have characterized the degrees of freedom region for the homogeneous $L$-cell and $K$-user MIMO MAC. 
We presented a degrees of freedom outer bound and linear achievable schemes for a few cases
that obtain the optimal degrees of freedom.
Transmit zero forcing is optimal in terms of the achievable degrees of freedom. 
The uplink scenario motivates us to build null space interference alignment 
scheme (with $N>M$) that promises the optimal degrees of freedom of $2K\beta$ for two cell case for arbitrary number of users.
By the uplink and downlink duality, the degrees of freedom results in this paper are also applicable to the 
downlink.


%
%
\appendices
\section{} \label{appendix_linear_independent_columns}

\begin{lemma} \label{lemma_linear_independent_columns}
Given $\bA$$\in$$\C^{m\times n}$ and $\bB$$\in$$\C^{n \times l}$ with $n\geq\max(m,l)$ where $\bA$ and $\bB$ 
are drawn from i.i.d. continuous distributions and are mutually independent, 
$\bA\bB$ has $\rank(\bA\bB)\=\min(m,l)$ with probability one.
\end{lemma}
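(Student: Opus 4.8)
The plan is to reduce the claim to a statement about a single polynomial. Write $r\=\min(m,l)$; since $\bA\bB\in\C^{m\times l}$ we always have $\rank(\bA\bB)\le r$, and $\rank(\bA\bB)<r$ holds if and only if every $r\times r$ minor of $\bA\bB$ vanishes. Each entry of $\bA\bB$ is bilinear in the entries of $\bA$ and $\bB$, so every such minor is a polynomial in those entries; viewing each complex entry as a pair of real coordinates, its real and imaginary parts are real polynomials in the $2(mn\+nl)$ underlying real variables. Hence it suffices to (i) pick one $r\times r$ minor, call the resulting function $p$, so that $p$ is not identically zero, and then (ii) show that a fixed nonzero polynomial in the entries of $\bA,\bB$ vanishes with probability zero.

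For (i) I would exhibit an explicit pair $(\bA_0,\bB_0)$. Because $n\ge\max(m,l)$, let $\bA_0\in\C^{m\times n}$ be $[\,\bI_m\ \ \bzero\,]$ and let $\bB_0\in\C^{n\times l}$ be the matrix whose top $l\times l$ block is $\bI_l$ and whose remaining $n\-l$ rows are zero. Multiplying out (using $n\ge m$ and $n\ge l$) shows in one line that $\bA_0\bB_0$ contains an $r\times r$ identity submatrix, so the corresponding $r\times r$ minor of $\bA_0\bB_0$ equals $1$. Taking $p$ to be exactly that minor gives $p\not\equiv 0$, since it is nonzero at $(\bA_0,\bB_0)$.

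For (ii) I would invoke the standard fact: if $q$ is a polynomial in $N$ complex variables that is not identically zero and $X_1,\dots,X_N$ are independent complex random variables whose laws have no atoms, then $\Pr\big(q(X_1,\dots,X_N)\=0\big)\=0$. This is an induction on $N$: expand $q$ in powers of $X_N$; conditioned on $X_1,\dots,X_{N-1}$, either all coefficients vanish, which forces a fixed nonzero coefficient polynomial in $X_1,\dots,X_{N-1}$ to be zero (probability zero by the inductive hypothesis), or the conditional polynomial in $X_N$ is a nonzero univariate polynomial with finitely many roots, which $X_N$ avoids with probability one since its law is atomless; then integrate over the conditioning. Applying this to $q\=p$ and to the entries of $\bA,\bB$ — which by hypothesis are i.i.d. from a continuous (hence atomless) distribution and mutually independent — gives $\Pr\big(p(\bA,\bB)\=0\big)\=0$. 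Since $\{\rank(\bA\bB)<r\}\subseteq\{p(\bA,\bB)\=0\}$, and $\rank(\bA\bB)\le r$ always, we conclude $\rank(\bA\bB)\=r$ almost surely.

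I do not expect a genuine obstacle; both ingredients — the witness construction and the polynomial non-vanishing lemma — are elementary. The only points that need care are that ``continuous distribution'' is used here in the atomless sense, which is exactly what the induction in (ii) needs, and that the complex setting is accommodated by treating a holomorphic polynomial as a pair of real polynomials on the ambient Euclidean space (equivalently, a non-identically-zero such polynomial has Lebesgue-null complex zero set).
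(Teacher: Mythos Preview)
Your argument is correct and takes a genuinely different route from the paper. The paper fixes the case $\min(m,l)=m$, extracts the first $m$ columns $\hbB$ of $\bB$ so that $\rank(\bA\hbB)\le\rank(\bA\bB)\le m$, and then attacks $\det(\bA\hbB)$ by a block decomposition: writing $\bA=[\bbA\ \tbA]$ and $\hbB^*=[\bbB\ \tbB]$ with $\bbA,\bbB\in\C^{m\times m}$, it factors $\det(\bA\hbB)=\det(\bbA\bbB^*)\det\big(\bI_m+(\bbA\bbB^*)^{-1}\tbA\tbB^*\big)$ and argues that each factor is almost surely nonzero. Your approach instead observes that any fixed $r\times r$ minor of $\bA\bB$ is a polynomial in the entries, exhibits an explicit witness $(\bA_0,\bB_0)$ built from padded identities at which that minor equals $1$, and then appeals to the standard lemma that a non-identically-zero polynomial in independent atomless variables vanishes with probability zero. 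Your route is arguably cleaner: it handles both cases $m\le l$ and $m>l$ uniformly, and it sidesteps the step in the paper's proof where the invertibility of $\bI_m+(\bbA\bbB^*)^{-1}\tbA\tbB^*$ is asserted rather than justified (a step that, incidentally, is most naturally proved by exactly the polynomial-nonvanishing principle you invoke). The paper's decomposition, on the other hand, is more hands-on and keeps the argument entirely at the level of matrix identities without abstracting to zero sets of polynomials.
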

\begin{proof}
First, we assume $\min(m,l)$$=$$m$ and decompose $\bB$$=$$ \big[ \hbB \ \bB{'} \big]$ where 
$\hbB\in\C^{n \times m}$ is formes by taking the first $m$ columns of $\bB$ and 
$\bB{'}\in\C^{n \times (l\-m)}$ is composed of columns from $m\+1$ to $l$  
of $\bB$. Then, about $\rank(\bA\bB)$ we have 
\beq
\rank( \bA\hbB )\leq \rank( \bA\bB\=[ \bA\hbB \ \bA\bB{'}] ) \leq \min(m,l)\=m \label{ap1.0}.
\eeq
Note that when $\min(m,l)\=l$, we need to consider the matrix $\bB^*\bA^*$
and it is handled similarly to the case $\min(m,l)\=m$. Thus, we omit the case $\min(m,l)\=l$ and 
focus on $\min(m,l)\=m$.

We further decompose $\bA$$=$$\big[ \bbA \ \tbA \big]$ and 
$\hbB^*$$=$$ \big[ \bbB \ \tbB \big]$ where $\bbA$$\in$$\C^{m\times m}$ and 
$\bbB$$\in$$\C^{m\times m}$ are leading principal submatrices 
of $\bA$ and $\hbB^*$, respectively, and $\tbA$$\in$$\C^{m\times (n\-m)}$ 
and $\tbB$$\in$$\C^{m\times (n\-m)}$ are submatrices corresponding to columns 
from $m\+1$ to $n$ of $\bA$ and $\hbB^*$, respectively. 

We claim $\Pr\!\big( \big| \!\det\!\big(\bA\hbB\big)\! \big|\! >\!0  \big)=1$. 
The claim is verified by providing the converse, i.e., $\Pr\!\big( \!\det\!\big( \bA\hbB \big)  \!\= \!0 \big)$$=$$0$. 
Since $\bA$ and $\hbB$ are drawn from i.i.d. continuous distributions, their principal 
submatrices $\bbA$ and $\bbB^*$ (which are square matrices) have 
$\rank\lp \bbA \rp$$=$$m$ and $\rank\lp \bbB^* \rp$$=$$m$ almost surely, respectively.
Now, we have
\beq
\!\Pr\! \big( \! \det\!\big( \!\bA\hbB \big)  \!\= 0 \big)
\!\!\d4&=&\d4\!\! \Pr\! \big( \det\big( \bbA\bbB^* + \tbA\tbB^* \big)  \= 0\big)  \nonumber \\
\!\!\d4&=&\d4\!\! \Pr\! \big(  \det\big( \bbA\bbB^* \big) \nonumber \\
\!\!\d4& &\d4\!\! \hspace{1.0cm}  \times \det\big( \bI_{m} \+ \big( \bbA\bbB^* \big)^{-1} \tbA\tbB^*  \big)  \=0 \big)  \nonumber \\
\!\!\d4&=&\d4\!\! \Pr\! \big(\!  \big\{ \! \det\!\big(\! \bbA\bbB^* \!\big)\=0 \big\}  \nonumber \\
\!\!\d4& &\d4\!\! \hspace{1.0cm}  \cup \big\{ \! \det\!\big( \bI_{m} \+ \big(\! \bbA\bbB^{\!*} \!\big)^{-1} \tbA\tbB^* \!\big)\=0 \big\} \big)  \nonumber.
\eeq    
By using the fact that both of $\bbA\bbB^*$ and 
$\bI_{m} \+ \lp \bbA\bbB^* \rp^{-1} \tbA\tbB^*$ are invertible $m \times m$ matrices, 
we obtain
\beq
\!\Pr\! \big( \! \det\!\big( \!\bA\hbB \big)  \= 0 \big)
\d4&\leq&\d4 \Pr\! \big(\!  \det\!\big( \bbA\bbB^* \big)\=0 \big) \nonumber \\
     \d4& &\d4 \hspace{0.5cm}\+ \Pr\! \big(\!  \det\!\big( \bI_{m} \+ \lp\! \bbA\bbB^* \!\rp^{-1}\! \tbA\tbB^* \big)\=0 \big)  \nonumber
\eeq
where $\Pr\! \big(\!  \det\!\big(\! \bbA\bbB^* \big)\=0 \big)\=0$ and 
$\Pr\! \big(\!  \det\!\big( \bI_{m} \+ \lp\! \bbA\bbB^* \!\rp^{-1} \tbA\tbB^*\! \big)\=0 \big)\=0$.
Consequently, we get $\Pr\big( \det\big( \bA\hbB \big)  \= 0 \big)\=0$. This concludes the proof.  
\end{proof}

\section{} \label{appendix_equivalent_form}

\begin{lemma} \label{lemma_equivalent_form}
For any $\bP_m\in\C^{M \times N}$ and nondegenerate 
$\bH_{m, \barm k}\in\C^{N\times M}$ with $\rank\lp \bH_{m,\barm k} \rp=M$ and $N>M$, 
\beq 
\dim\!\lp null\!\lp \bP_{m}\bH_{m,\barm k} \!\rp \!\rp 
               \=\dim\!\lp ran\!\lp \bH_{m,\barm k} \!\rp \cap null\!\lp \bP_m \!\rp \!\rp. \label{lemma_equivalent_form_equation}
\eeq
\end{lemma}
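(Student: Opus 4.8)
The plan is to exploit the fact that the hypotheses $N>M$ and $\rank(\bH_{m,\barm k})\=M$ force the linear map $\bx\mapsto\bH_{m,\barm k}\bx$ to be injective, so that it restricts to a \emph{vector-space isomorphism} between the two subspaces whose dimensions \eqref{lemma_equivalent_form_equation} compares. Write $\bH\=\bH_{m,\barm k}$ and $\bP\=\bP_m$ for brevity, and define the linear map $\phi: null(\bP\bH)\to\C^N$ by $\phi(\bx)\=\bH\bx$.

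First I would check that $\phi$ takes values in $ran(\bH)\cap null(\bP)$. Indeed, for $\bx\in null(\bP\bH)$ we have $\bH\bx\in ran(\bH)$ by definition, and $\bP(\bH\bx)\=(\bP\bH)\bx\=\bzero$, so $\bH\bx\in null(\bP)$ as well. Hence $\phi$ is a well-defined linear map from $null(\bP\bH)$ into $ran(\bH)\cap null(\bP)$. Injectivity of $\phi$ is immediate from $\rank(\bH)\=M$ (a full column rank, since $N>M$): the columns of $\bH$ are linearly independent, so $\bH\bx\=\bzero$ implies $\bx\=\bzero$, and in particular $\phi$ is injective on the subspace $null(\bP\bH)$. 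For surjectivity, take any $\by\in ran(\bH)\cap null(\bP)$; since $\by\in ran(\bH)$ there is $\bx\in\C^M$ with $\by\=\bH\bx$, and then $(\bP\bH)\bx\=\bP\by\=\bzero$, so $\bx\in null(\bP\bH)$ and $\phi(\bx)\=\by$. Therefore $\phi$ is an isomorphism, which yields $\dim\lp null(\bP\bH)\rp\=\dim\lp ran(\bH)\cap null(\bP)\rp$, i.e.\ \eqref{lemma_equivalent_form_equation}.

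I do not expect a genuine obstacle here; the only point requiring care is that the nondegeneracy together with $N>M$ is \emph{exactly} what supplies the injectivity of $\bH$ on which the argument rests — without full column rank the count would instead read $\dim\lp null(\bP\bH)\rp\=\dim\lp ran(\bH)\cap null(\bP)\rp\+\dim\lp null(\bH)\rp$. An equivalent, equally short route avoids $\phi$ entirely and is purely a dimension count: $\rank(\bP\bH)\=\dim\,\bP\lp ran(\bH)\rp\=\dim\,ran(\bH)\-\dim\lp ran(\bH)\cap null(\bP)\rp\=M\-\dim\lp ran(\bH)\cap null(\bP)\rp$, and then applying rank--nullity to the $M$-column matrix $\bP\bH$ gives $\dim\lp null(\bP\bH)\rp\=M\-\rank(\bP\bH)\=\dim\lp ran(\bH)\cap null(\bP)\rp$, which finishes the proof.
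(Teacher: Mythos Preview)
Your proof is correct and follows essentially the same idea as the paper: both arguments use the injectivity of $\bH_{m,\barm k}$ (guaranteed by full column rank) to identify $null(\bP_m\bH_{m,\barm k})$ with $ran(\bH_{m,\barm k})\cap null(\bP_m)$ via $\bx\mapsto\bH_{m,\barm k}\bx$. Your version is in fact a bit more explicit in checking that this map is a well-defined bijection between the two subspaces, and your alternate rank--nullity count is a nice bonus the paper does not include.
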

\begin{proof}
By definition, $\dim\!\lp null\!\lp \bP_{m}\bH_{m,\barm k} \rp \rp$ is rewritten by
\beq
\d4 \dim \! \big( \!\!\!\d4&\{&\d4\!\!\! \ba \!\in\! \C^{M \times 1} \! : \! \bP_m\bH_{m,\barm k}\ba\=\bzero \} \big) \nonumber \\
\d4&=&\d4 \dim\!\big(\! \big\{ \ba \!\in\! \C^{M \times 1}\!: \!\bH_{m,\barm k}\ba \in null\lp\bP_m\rp \big\} \big) \label{4.3.1} \\
\d4&=&\d4 \dim\!\big(\! \big\{ \bb \!\in\! \C^{N \times 1}\!: \!\bb \in ran\!\lp \bH_{m,\barm k} \!\rp \nonumber \\ 
\d4& &\d4   \hspace{4cm} \& \bb\in null\!\lp \bP_m \!\rp\! \big\}\! \big) \label{4.3.2} 
\eeq
where \eqref{4.3.1} follows from the facts that $null\lp \bH_{m,\barm k} \rp=\phi$. 
In \eqref{4.3.2}, we use the fact that 
the mapping from $\ba$ to $\bb$ via $\bH_{m,\barm k}$ (i.e., $\bH_{m,\barm k}\ba=\bb$) for $\forall\ba\in\C^{M\times 1}$ 
is one-to-one if and only if $N\geq M=\rank\lp\bH_{m,\barm k}\rp$. 
Now the expression in \eqref{4.3.2} implies \eqref{lemma_equivalent_form_equation}. 
\end{proof}

\bibliographystyle{IEEEtran}
\bibliography{IEEEabrv,Globecom2011_rev3}

\end{document}